\newcommand{\ud}{\mathrm{d}}
\newcommand{\cD}{{\mathcal D}}
\numberwithin{equation}{section}
\newtheorem{theorem}{Theorem}[section]
\newtheorem{lemma}[theorem]{Lemma}
\newtheorem{prop}[theorem]{Proposition}
\newtheorem{cor}[theorem]{Corollary}
\newtheorem{remark}[theorem]{Remark}
\theoremstyle{definition}
\numberwithin{equation}{section}
\begin{document}

\thispagestyle{empty}

\vspace*{1cm}

\begin{center}

{\LARGE\bf On pairs of interacting electrons in a quantum wire} \\

\vspace*{2cm}

{\large Joachim Kerner \footnote{E-mail address: {\tt Joachim.Kerner@fernuni-hagen.de}} }%

\vspace*{5mm}

Department of Mathematics and Computer Science\\
FernUniversit\"{a}t in Hagen\\
58084 Hagen\\
Germany\\

\end{center}

\vfill

\begin{abstract} In this paper we consider pairs of interacting electrons moving in a simple quantum wire, namely the half-line $\mathbb{R}_+$. In particular, we extend the results obtained in \cite{KernerElectronPairs} by allowing for contact interactions of the Lieb-Liniger type between the two electrons constituting the pair. We construct the associated Hamiltonian rigorously and study its spectral properties. We then investigate Bose-Einstein condensation of pairs and prove, as a main result, the existence of condensation whenever the Hamiltonian has a non-trivial discrete spectrum. Most importantly, condensation is proved for very weak and very strong contact interactions. 
\end{abstract}

\newpage

\section{Introduction}
As understood by Cooper \cite{CooperBoundElectron} and later by Bardeen and Schrieffer \cite{BCSI}, the pairing of electrons is the key mechanism in the formation of the (type-I) superconducting phase in metals. Most importantly, although each electron is of course a fermion, the (Cooper) pair as a whole can be treated as a bosonic particle \cite{MR04} and, as a consequence, a gas of pairs can undergo Bose-Einstein condensation \cite{EinsteinBEC}. This condensation of pairs then manifests the superconducting phase and is ultimately responsible for to the coherence in the many-electron system which is characteristic for this phase. 

This paper is the third in a series of papers~\cite{KernerElectronPairs,KernerSurfaceDefects} in which the condensation of pairs has been investigated for a simple quantum wire, i.e., the half-line $\mathbb{R}_+$. Aside from [Section~3.1,\cite{KernerElectronPairs}] where spatially localised (and hence non-separable) two-particle interactions were considered, the electrons constituting a pair have been assumed not to interact with each other except, of course, for the (binding-) interaction responsible for the pairing. However, from a physical point of view one expects the two electrons to repel each when being close. For this reason we will allow in this paper for additional (repulsive) contact interactions between the two electrons of the Lieb-Liniger type \cite{LL63}, i.e., the interaction term shall formally be given by
\begin{equation}\label{InteractionTerm}
\alpha \delta(x-y)\ ,
\end{equation}
with $\alpha \geq 0$ being the interaction strength and $\delta$ the Dirac-delta distribution. Contact interactions of this type are frequently investigated in mathematical physics \cite{Ya67,AGHH88,Ha07,Ha08,BKContact,BolteGarforthTwo} since they often provide solvable interacting many-particle models and, due to technological advances in the last decades, they have also become increasingly important in more applied areas of physics in recent years \cite{cazalilla2011one,OlshaniiDunjko,Volosniev}. Most importantly, however, contact interactions of the form \eqref{InteractionTerm} are assumed to be a good approximation of more general short-range contact interactions.

The paper is organised as follows: In the Section~\ref{Model} we introduce the formal Hamiltonian of a single pair and establish a rigorous realisation thereof. In Section~\ref{Spectrum} we perform a spectral analysis of this Hamiltonian, characterising the essential as well as the discrete part of the spectrum. Most importantly, we show that the discrete part of the spectrum is non-trivial for very weak and very strong values of the interaction strength. Finally, we use the obtained knowledge about the spectrum to investigate Bose-Einstein condensation of interacting pairs in Section~\ref{CondensationSection}. We prove that condensation exists whenever the Hamiltonian possesses eigenstates below the essential spectrum. In particular, we prove that condensation exists for very weak and very strong contact interactions.
\section{The model}
\label{Model}
We consider two electrons with opposite spin moving on a simple quantum wire, i.e., the half-line $\mathbb{R}_+=[0,\infty)$. The Hamiltonian of the system shall (formally) be given by
\begin{equation}\label{HamiltonianInteractingPair}
H_{\alpha}=-\frac{\partial^2}{\partial x^2}-\frac{\partial^2}{\partial y^2}+v_{b}(|x-y|)+\alpha \delta(x-y)\ ,
\end{equation}
where $v_b:\mathbb{R}_+ \rightarrow \overline{\mathbb{R}}$ is the binding potential 
\begin{equation}
v_b(x):=\begin{cases}
0 \quad \text{if}\quad x\leq d\ ,\\
\infty \quad \text{else}\ ,
\end{cases}
\end{equation}
leading to a bound pair of electrons (``Cooper pair'') with $d > 0$ characterising the size of the pair. 

In order to arrive at a rigorous realisation of the Hamiltonian~\eqref{HamiltonianInteractingPair} one introduces the quadratic form
\begin{equation}\label{form}
q_{\alpha}[\varphi]=\int_{\Omega}|\nabla \varphi|^2\ \ud x+\alpha \int_{\mathbb{R}}|\varphi(x,x)|^2 \ \ud x
\end{equation}
on $\cD_{q}:=\{\varphi \in H^1(\Omega): \varphi(x,y)=\varphi(y,x)\ \text{and}\ \varphi|_{\partial \Omega_D}=0 \}$ with
\begin{equation}
\Omega:=\{(x,y)\in \mathbb{R}^2_+:\ |x-y| \leq d \}
\end{equation}
and $\partial \Omega_D:=\{(x,y)\in \Omega:\ |x-y|=d \}$. Setting
\begin{equation}
L^2_s(\Omega):=\{\varphi \in L^2(\Omega):\ \varphi(x,y)=\varphi(y,x)\}
\end{equation}
we have the following statement.
\begin{theorem} On the Hilbert space $L^2_s(\Omega)$, the form~\eqref{form} is densely defined, closed and bounded from below.
\end{theorem}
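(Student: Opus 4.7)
Three properties must be established: that $\cD_q$ is dense in $L^2_s(\Omega)$, that $q_\al$ is bounded below, and that $q_\al$ is closed. Density is a routine approximation argument: given $\varphi\in L^2_s(\Omega)$, one regularises by (symmetrised) mollification and then applies a smooth cut-off that vanishes in a neighbourhood of the Dirichlet part $\partial\Om_D$ of the boundary; since $\partial\Om_D$ has two-dimensional Lebesgue measure zero this costs nothing in $L^2$. The lower bound $q_\al[\varphi]\geq 0$ is immediate from $\al\geq 0$ and the non-negativity of both summands in \eqref{form}.

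The substantive content is closedness. The plan is to decompose $q_\al = q_0 + r_\al$ with
\[
q_0[\varphi]:=\int_\Om|\nabla\varphi|^2\,\ud x,\qquad r_\al[\varphi]:=\al\int_{\rz_+}|\varphi(x,x)|^2\,\ud x,
\]
and to invoke the KLMN theorem. The form $q_0$ is closed on $\cD_q$ because $\cD_q$ is a closed subspace of $H^1(\Om)$: the symmetry condition $\varphi(x,y)=\varphi(y,x)$ is preserved under $H^1$-limits, and so is the Dirichlet trace condition on $\partial\Om_D$. On this subspace the form norm $(q_0+\|\cdot\|_{L^2}^2)^{1/2}$ coincides with the full $H^1$-norm, which is complete.

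To handle $r_\al$ it suffices to show that it is infinitesimally $q_0$-bounded: for every $\eps>0$ there exists $C_\eps>0$ with
\[
\int_{\rz_+}|\varphi(x,x)|^2\,\ud x\ \leq\ \eps\int_\Om|\nabla\varphi|^2\,\ud x\ +\ C_\eps\int_\Om|\varphi|^2\,\ud x \qquad (\varphi\in\cD_q).
\]
I would prove this trace inequality by passing to centre-of-mass and relative coordinates $u=x+y$, $v=x-y$. The symmetry reduces the effective $v$-range to $[0,d]$, and, decisively, $\varphi$ vanishes at $v=d$ by the Dirichlet condition. For fixed $u$ the fundamental theorem of calculus then gives
\[
|\varphi(u,0)|^2=-\int_0^d\partial_v|\varphi(u,v)|^2\,\ud v\ =\ -2\,\re\int_0^d\overline{\varphi}\,\partial_v\varphi\,\ud v,
\]
and the elementary bound $2|a\bar b|\leq\eps|a|^2+\eps^{-1}|b|^2$ followed by integration in $u$ (and the constant Jacobian of the coordinate change) yields the inequality.

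The KLMN theorem then delivers that $q_\al=q_0+r_\al$ is closed, densely defined and bounded below on $\cD_q$. The principal technical obstacle is securing the arbitrarily small gradient coefficient in the trace inequality on a domain that is unbounded in the centre-of-mass direction; this is exactly where the bounded width $d$ of $\Om$ in the relative coordinate, together with the Dirichlet condition at $v=d$, are essential.
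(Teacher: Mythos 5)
Your overall strategy coincides with the paper's: density of smooth functions, non-negativity for the lower bound, and a trace estimate on the diagonal to obtain closedness (the paper relegates the trace estimate to the cited references and, since the perturbation $r_\al$ is non-negative, only needs the bound $\int_{\rz}|\varphi(x,x)|^2\,\ud x\leq c\|\varphi\|^2_{H^1(\Om)}$, which already makes the form norms of $q_0$ and $q_\al$ equivalent; your appeal to KLMN and to \emph{infinitesimal} boundedness is therefore more than is required, though not wrong).

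There is, however, a genuine gap in your derivation of the trace inequality. You integrate in the relative coordinate $v=x-y$ at fixed $u=x+y$ from the diagonal $v=0$ to the Dirichlet edge $v=d$. But $\Om$ also requires $x,y\geq 0$, i.e.\ $|v|\leq u$, so for $u<d$ the segment $\{(u,v):0\leq v\leq d\}$ leaves $\Om$ at $v=u$ (the Neumann edge $y=0$), where $\varphi$ need not vanish; the fundamental theorem of calculus argument therefore only covers the part of the diagonal with $x=y\geq d/2$, and the corner region near the origin is not handled. Averaging over the available width $\min(u,d)$ instead produces a factor $1/u$ that is not controlled near $u=0$. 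The fix is to change the direction of integration: for fixed $x=t\geq 0$ the segment $\{(t,y):t\leq y\leq t+d\}$ lies entirely in $\Om$ and ends on $\partial\Om_D$, so
\begin{equation*}
|\varphi(t,t)|^2=-\int_t^{t+d}\partial_y|\varphi(t,y)|^2\,\ud y\leq \eps\int_t^{t+d}|\partial_y\varphi(t,y)|^2\,\ud y+\eps^{-1}\int_t^{t+d}|\varphi(t,y)|^2\,\ud y\,,
\end{equation*}
and integrating over $t\geq 0$ (Fubini over the half of $\Om$ above the diagonal) gives the desired estimate uniformly, including at the corner. Alternatively, the bounded corner piece $\Om\cap\{x,y\leq d\}$ is a Lipschitz domain on which the standard trace theorem applies, and the strip argument you give handles the remainder. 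With this repair your proof is complete and matches the paper's in substance.
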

\begin{proof} Density follows directly from the fact that $C^{\infty}_{0}(\tilde{\Omega}) \subset L^2(\tilde{\Omega})$ is a dense subset for $\tilde{\Omega}:=\{(x,y)\in \mathbb{R}^2_+:\ y \leq x \ \text{and}\ |x-y| \leq d\}$. Also, since the form is positive, it is bounded from below.
	
	Finally, similar to \cite{BKContact,KM16} one can establish the trace estimate
	\begin{equation}
	\int_{\mathbb{R}}|\varphi(x,x)|^2 \ \ud x  \leq c \|\varphi\|^2_{H^1(\Omega)}\ ,
\end{equation}
$c > 0$ being some constant, from which closedness readily follows.
\end{proof}
Hence, according to the representation theorem of quadratic forms~\cite{BEH08} there exists a unique self-adjoint operator being associated with $q_{\alpha}[\cdot]$ which shall be denoted by $H_{\alpha}$.
\begin{remark} The operator $H_{\alpha}$ acts as the two-dimensional Laplacian subjected to Dirichlet boundary conditions along $\partial \Omega_D$, Neumann boundary conditions along $\partial \Omega \setminus \partial \Omega_D$ and Robin boundary conditions along the diagonal $x=y$, see \cite{BKContact,KernerMuhlenbruch2}.
\end{remark}
\section{On the spectrum of $H_{\alpha}$}
\label{Spectrum}
In this section we are concerned with the spectrum of the operator $H_{\alpha}$ and in a first result we characterise the essential part thereof.

 In order to do this we consider the (self-adjoint) one-dimensional Laplacian $-\frac{\ud^2}{\ud x^2}$ on the interval $[0,\frac{d}{\sqrt{2}}]$ with operator domain
\begin{equation}\label{OperatorDomainOneDimensionalII}
\cD_{D,\alpha}:=\left\{\varphi \in H^2(0,d/\sqrt{2}):\ \varphi^{\prime}(0)-\frac{\alpha}{2\sqrt{2}}\varphi(0)=0 \quad \text{and} \quad \varphi\left(\frac{d}{\sqrt{2}}\right)=0 \right\}\ .
\end{equation}
From~\eqref{OperatorDomainOneDimensionalII} we see that one imposes Robin boundary conditions at $x=0$ and Dirichlet boundary conditions at $x=\frac{d}{\sqrt{2}}$ (regarding the choice of the constant in the Robin boundary conditions see, for example, \cite{BKContact}). This operator has purely discrete spectrum and we shall denote its eigenvalues by $\{\varepsilon^{D}_n(\alpha)\}_{n \in \mathbb{N}_0}$.
\begin{theorem}[Essential spectrum] One has
	\begin{equation}
	\sigma_{ess}(H_{\alpha})=\left[\varepsilon^{D}_0(\alpha),\infty\right)\ .
	\end{equation}
\end{theorem}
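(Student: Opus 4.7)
The plan is to work in center-of-mass/relative coordinates $u=(x+y)/\sqrt 2$, $v=(x-y)/\sqrt 2$, in which the Laplacian becomes $-\partial_u^2-\partial_v^2$ and $\Omega$ becomes $\tilde\Omega:=\{(u,v):|v|\le d/\sqrt 2,\ u\ge|v|\}$. Using the symmetry $\varphi(x,y)=\varphi(y,x)$ to restrict to $v\ge 0$, and rescaling to absorb the Jacobian, $H_\alpha$ is unitarily equivalent to $-\Delta$ on the half-domain $\{0\le v\le d/\sqrt 2,\ u\ge v\}$ with Dirichlet condition at $v=d/\sqrt 2$, Robin condition with coefficient $\alpha/(2\sqrt 2)$ at $v=0$, and Neumann condition on the slanted piece $\{u=v,\ 0\le u\le d/\sqrt 2\}$. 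The key observation is that for $u>d/\sqrt 2$ the domain becomes the product half-strip $(d/\sqrt 2,\infty)\times[0,d/\sqrt 2]$, and the separated transverse operator is precisely the $T_\alpha$ from \eqref{OperatorDomainOneDimensionalII}, with spectrum $\{\varepsilon^D_n(\alpha)\}_{n\in\mathbb N_0}$.

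For the inclusion $[\varepsilon^D_0(\alpha),\infty)\subseteq\sigma_{ess}(H_\alpha)$ I would build a singular Weyl sequence. Fix $\lambda=\varepsilon^D_0(\alpha)+k^2$ with $k\ge 0$, let $\varphi_0$ denote the $L^2$-normalised ground state of $T_\alpha$, and pick $\chi\in C_0^\infty(\mathbb R)$ supported in $(1,2)$ with $\|\chi\|_2=1$. Set
\[
\psi_n(u,v):=n^{-1/2}\,\varphi_0(v)\,e^{\ui k u}\,\chi\!\left(\tfrac{u-n}{n}\right).
\]
For $n$ sufficiently large, $\supp\psi_n\subset\{u>d/\sqrt 2\}$, so $\psi_n$ satisfies all relevant boundary conditions and lies in $\cD_q$. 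A short calculation yields $\|\psi_n\|\to 1$, while $(H_\alpha-\lambda)\psi_n$ consists only of remainder terms $n^{-1}\chi'((u-n)/n)$ and $n^{-2}\chi''((u-n)/n)$ times bounded factors; each has $L^2$-norm $O(n^{-1})$. Since $\psi_n\rightharpoonup 0$ weakly, this is a singular Weyl sequence for $\lambda$.

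For the reverse inclusion I would apply Persson's formula
\[
\inf\sigma_{ess}(H_\alpha)=\sup_{K\Subset\tilde\Omega}\inf\bigl\{q_\alpha[\varphi]:\varphi\in\cD_q,\ \supp\varphi\cap K=\emptyset,\ \|\varphi\|=1\bigr\}
\]
with $K=K_R:=\{(u,v)\in\tilde\Omega:u\le R\}$ for $R>d/\sqrt 2$. Admissible $\varphi$ are then supported in the product region, so Fubini together with the min--max characterisation of $\varepsilon^D_0(\alpha)$ as the bottom of $\spec(T_\alpha)$ gives, for almost every $u$,
\[
\int_0^{d/\sqrt 2}|\partial_v\varphi|^2\,dv+\tfrac{\alpha}{2\sqrt 2}|\varphi(u,0)|^2\ge\varepsilon^D_0(\alpha)\int_0^{d/\sqrt 2}|\varphi(u,v)|^2\,dv.
\]
Integrating in $u$ and discarding the non-negative $|\partial_u\varphi|^2$ contribution yields $q_\alpha[\varphi]\ge\varepsilon^D_0(\alpha)$, whence $\inf\sigma_{ess}(H_\alpha)\ge\varepsilon^D_0(\alpha)$.

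The main technical obstacle is the slanted Neumann boundary $u=v$, which destroys a global product structure and so forbids a direct separation of variables on all of $\tilde\Omega$. Both arguments above side-step it cleanly because it sits inside the compact region $\{u\le d/\sqrt 2\}$: the Weyl sequence is pushed out to infinity, and in Persson's formula the cut-off $K_R$ excises the slanted piece. The remaining care is in tracking the $\sqrt 2$-factors produced by the orthogonal change of variables so that the induced Robin coefficient matches the one defining $T_\alpha$ in \eqref{OperatorDomainOneDimensionalII}.
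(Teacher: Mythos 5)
Your proposal is correct, and it rests on the same structural observation as the paper's (very brief) proof: away from the corner region the domain is a product half-strip, the transverse operator is exactly the one with domain \eqref{OperatorDomainOneDimensionalII}, and the longitudinal Neumann half-line operator contributes $[0,\infty)$. What you add is the actual technical content that the paper delegates to \cite{KernerElectronPairs,KernerMuhlenbruch2}: the explicit singular Weyl sequence $\psi_n$ (which indeed satisfies $\|\psi_n\|=1$ exactly and $\|(H_\alpha-\lambda)\psi_n\|=O(n^{-1})$, since $\varphi_0$ is a genuine eigenfunction of $T_\alpha$ and the support of $\psi_n$ avoids the slanted Neumann piece for large $n$), and a clean lower bound via the transverse Rayleigh quotient. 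You also correctly track the factor $\alpha/(2\sqrt 2)$, which arises from the Jacobian of the rotation combined with the symmetry reduction to $v\ge 0$; this is the one place where sign/constant errors typically creep in. The only point I would press you on is the invocation of Persson's formula: as usually stated it is for Schr\"odinger operators on $\rz^d$, so for a form with a trace term on a domain with mixed boundary conditions you should either justify it via an IMS partition of unity (noting that the diagonal term $\alpha\int|\varphi(x,x)|^2\,\ud x$ localises pointwise, and the localisation error is $O(R^{-2})$), or replace it by the more elementary Dirichlet--Neumann bracketing argument that the cited references actually use: insert a Neumann cut at $u=R$, observe that the bounded piece contributes only discrete spectrum, and separate variables on the remaining half-strip. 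Either route closes the gap; as written, the reverse inclusion is the only step that is not fully self-contained.
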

\begin{proof} A detailed proof is obtained using the methods employed in the proofs of~[Theorem~2.1,\cite{KernerElectronPairs}] and [Theorem~3.1,\cite{KernerMuhlenbruch2}]. 
	
	A sketch of the proof is as follows: Since the essential spectrum is determined by the behaviour of the domain at infinity, $\Omega$ in this sense reduces to a half-infinite rectangle with the corresponding boundary conditions. Then, using separation of variables, with one operator being $-\frac{\ud^2}{\ud x^2}$ on $\cD_{D,\alpha}$ and the other one $-\frac{\ud^2}{\ud y^2}$ on 
	\begin{equation}
	\tilde{\cD}:=\{\varphi \in H^2(0,\infty): \varphi^{\prime}(0)=0\}\ ,
	\end{equation}
	the statement follows readily since $\sigma\left(-\frac{\ud^2}{\ud y^2}\right)=[0,\infty)$.
\end{proof}
We now turn attention to the discrete part of the spectrum. We will prove that it is non-trivial given the interaction between the two particles is very weak or very strong.
\begin{theorem}[Discrete spectrum]\label{DiscreteSpectrum} There exist constants $\alpha_1,\alpha_2 > 0$ such that 
	\begin{equation}
	\sigma_{d}(H_{\alpha})\neq \emptyset
	\end{equation}
	for all $\alpha \geq 0$ such that $\alpha < \alpha_1$ or $\alpha > \alpha_2$.
\end{theorem}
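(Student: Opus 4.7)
The strategy is variational. In view of the characterisation of $\sigma_{ess}(H_\alpha)$ just established, the min-max principle reduces everything to exhibiting, for each $\alpha$ in the regime of interest, a trial function $\varphi \in \cD_q$ with
\[
q_\alpha[\varphi] \;<\; \varepsilon^D_0(\alpha)\,\|\varphi\|^2_{L^2(\Omega)}.
\]
I would treat the two regimes $\alpha$ small and $\alpha$ large separately, constructing such a $\varphi$ in each.

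\textbf{Weak coupling.} The operator $H_0$ coincides with the Hamiltonian already analysed in \cite{KernerElectronPairs}, where $\sigma_d(H_0) \neq \emptyset$ was established. Equivalently, there is some $\varphi_0 \in \cD_q$ witnessing the strict inequality above at $\alpha = 0$. Freezing this $\varphi_0$, the left-hand side
\[
q_\alpha[\varphi_0] \;=\; q_0[\varphi_0] \,+\, \alpha \!\int_{\mathbb{R}} |\varphi_0(x,x)|^2 \,\ud x
\]
is affine in $\alpha$, while $\varepsilon^D_0(\alpha)$ depends continuously on $\alpha$ through the classical Robin-Dirichlet eigenvalue problem on $[0,d/\sqrt 2]$. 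Hence the strict inequality survives on some interval $[0,\alpha_1)$ with $\alpha_1 > 0$.

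\textbf{Strong coupling.} Here I would compare $H_\alpha$ with the formal $\alpha \to \infty$ limit $H_\infty$, defined on $L^2_s(\Omega)$ through the form $q_0$ restricted to $\{\varphi \in \cD_q : \varphi(x,x) = 0\}$, i.e., the same Laplacian but with Dirichlet replacing Robin on the diagonal. The separation-of-variables argument of the essential-spectrum theorem, with Dirichlet replacing Robin at $0$ in the transverse problem, yields $\sigma_{ess}(H_\infty) = [2\pi^2/d^2,\infty)$. Assuming (see below) that $H_\infty$ admits an eigenvalue $\lambda_\infty < 2\pi^2/d^2$ with eigenfunction $\varphi_\infty \in H^1(\Omega)$ vanishing on both $\partial\Omega_D$ and the diagonal, one has $\varphi_\infty \in \cD_q$ and
\[
q_\alpha[\varphi_\infty] \;=\; \|\nabla\varphi_\infty\|^2 \;=\; \lambda_\infty\,\|\varphi_\infty\|^2 \qquad \text{for every } \alpha \geq 0,
\]
since the Dirac contribution vanishes identically. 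Because $\varepsilon^D_0(\alpha) \nearrow 2\pi^2/d^2$ as $\alpha \to \infty$, one can pick $\alpha_2$ so that $\varepsilon^D_0(\alpha) > \lambda_\infty$ for all $\alpha > \alpha_2$, and min-max delivers the eigenvalue.

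\textbf{The main obstacle} is therefore the existence of an eigenvalue of $H_\infty$ below $2\pi^2/d^2$. Passing to rotated coordinates $u = (x+y)/\sqrt 2$, $v = (x-y)/\sqrt 2$ and reducing to $\{y \leq x\}$ by the $x \leftrightarrow y$ symmetry, this becomes a bent-waveguide Laplacian on the domain $\{0 \leq v \leq d/\sqrt 2,\ u \geq v\}$, with Dirichlet on $v = 0$ and $v = d/\sqrt 2$ and Neumann on the kink $u = v$. I expect this to follow by adapting the trial-function construction of [Theorem~2.1,\cite{KernerElectronPairs}] to this mixed-boundary geometry: the triangular cap near the origin effectively replaces one transverse Dirichlet wall by a Neumann wall, which should push the infimum of the spectrum strictly below the transverse ground-state energy $2\pi^2/d^2$ of the straight strip, in the spirit of the Exner–Šeba corner-state mechanism.
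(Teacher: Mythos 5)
Your proposal matches the paper's proof: both regimes are handled by the same trial-function argument, using the $\alpha=0$ ground state for weak coupling and, for strong coupling, the ground state of the Dirichlet-on-the-diagonal operator $H_\infty$, whose vanishing diagonal trace makes $q_\alpha$ of it $\alpha$-independent, combined with the convergence $\varepsilon^D_0(\alpha)\to\varepsilon^D_0(\infty)=2\pi^2/d^2$. The only point you flag as the main remaining obstacle---existence of an eigenvalue of $H_\infty$ below $2\pi^2/d^2$---is precisely the antisymmetric case already proved in \cite{KernerElectronPairs} (antisymmetry across $x=y$ forces the Dirichlet condition there), which the paper simply cites rather than re-deriving.
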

\begin{proof} We first note that the discrete spectrum is non-empty for $\alpha=0$, see~\cite{KernerMuhlenbruch2}. Let $\varphi_{0} \in H^1(\Omega)$ be the corresponding (normalised) ground state with 
	\begin{equation}
	\int_{\Omega}|\nabla \varphi_0|^2\ \ud x =E_0 < \varepsilon^{D}_{0}(0)=\frac{\pi^2}{2d^2}\ .
	\end{equation}
 Since $\varepsilon^{D}_0(\alpha) \rightarrow \varepsilon^{D}_0(0)$ as $\alpha \rightarrow 0$, there exists a constant $\alpha_1 > 0$ such that 
	\begin{equation}
	E_0+\alpha \int_{\mathbb{R}}|\varphi_0(x,x)|^2 \ \ud x <\varepsilon^{D}_{0}(\alpha)
\end{equation}
for all $\alpha < \alpha_1$. Hence, using $\varphi_0$ as a trial function, the statement follows by the Rayleigh-Ritz variational principle \cite{BEH08}.

Now, in \cite{KernerElectronPairs} the functions in the form domain were assumed to be anti-symmetric with respect to the diagonal $x=y$ which effectively implies Dirichlet boundary conditions (due to continuity). Dirichlet boundary conditions along the diagonal, on the other hand, correspond to the case $\alpha=\infty$ for which the associated one-dimensional Laplacian has domain
\begin{equation}
\cD_{D,\infty}:=\left\{\varphi \in H^2(0,d/\sqrt{2}):\ \varphi(0)=0 \quad \text{and} \quad \varphi\left(\frac{d}{\sqrt{2}}\right)=0 \right\}\ ,
\end{equation}
and eigenvalues $\{\varepsilon^D_n(\infty)\}_{n \in \mathbb{N}_0}$. Furthermore, as proved in~\cite{KernerElectronPairs}, the discrete spectrum of $H_{\alpha}$ is non-empty in this case, i.e., for $\alpha=\infty$.

Let $\tilde{\varphi}_0$ be the corresponding ground state with 
	\begin{equation}
	\int_{\Omega}|\nabla \tilde{\varphi}_0|^2\ \ud x =\tilde{E}_0 < \varepsilon^{D}_{0}(\infty)=\frac{2\pi^2}{d^2}\ .
	\end{equation}
Now, since $\varepsilon^{D}_0(\alpha) \rightarrow \varepsilon^{D}_0(\infty)$ as $\alpha \rightarrow \infty$, there exists a constant $\alpha_2 > 0$ such that 
\begin{equation}
q_{\alpha}[\tilde{\varphi}_0] < \varepsilon^{D}_{0}(\alpha)
\end{equation}
which proves the statement. Note here that $\tilde{\varphi}_0$ has vanishing trace along $x=y$.
\end{proof}
\begin{remark} Theorem~\ref{DiscreteSpectrum} has an important physical consequence. Namely, even for very large interaction strengths $\alpha$, there exists at least on bound state. This means, intuitively speaking, that strong contact interactions do not destabilise completely. As shown in [Lemma~3.5,\cite{KernerElectronPairs}], this is not the case for other types of singular two-particle interactions which are not contact interactions. 
\end{remark}
In a next step consider the (self-adjoint) one-dimensional Laplacian $-\frac{\ud^2}{\ud x^2}$ on the interval $[0,\frac{d}{\sqrt{2}}]$ with operator domain
\begin{equation}\label{OperatorDomainOneDimensional}
\cD_{N,\alpha}:=\left\{\varphi \in H^2(0,d/\sqrt{2}):\ \varphi^{\prime}(0)-\frac{\alpha}{2\sqrt{2}}\varphi(0)=0 \quad \text{and} \quad \varphi^{\prime}\left(\frac{d}{\sqrt{2}}\right)=0 \right\}\ .
\end{equation}
From~\eqref{OperatorDomainOneDimensional} we see that one imposes Robin boundary conditions at $x=0$ and Neumann boundary conditions at $x=\frac{d}{\sqrt{2}}$. Denoting its eigenvalues by $\{\varepsilon^N_n(\alpha)\}_{n \in \mathbb{N}_0}$ we arrive at the following statement.
\begin{lemma}[Ground state energy] Let $E_{0}(\alpha):=\inf \sigma(H_{\alpha})$ denote the ground state energy of $H_{\alpha}$. Then one has the estimate
	\begin{equation}
	2\varepsilon^N_0(\alpha)\leq  E_{0}(\alpha) \leq \varepsilon^{D}_0(\alpha)\ .
	\end{equation}
\end{lemma}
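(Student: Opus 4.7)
The upper bound $E_0(\alpha)\le\varepsilon^D_0(\alpha)$ is immediate from the previous theorem: $E_0(\alpha)=\inf\sigma(H_\alpha)\le\inf\sigma_{ess}(H_\alpha)=\varepsilon^D_0(\alpha)$.

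For the lower bound my plan is to pass to centre-of-mass and relative coordinates $u=(x+y)/\sqrt{2}$, $v=(x-y)/\sqrt{2}$, in which $-\Delta=-\partial^2_u-\partial^2_v$ and $\Omega$ takes the form $\{(u,v)\in\rz^2:u\ge|v|,\,|v|\le d/\sqrt{2}\}$. The bosonic symmetry $\varphi(x,y)=\varphi(y,x)$ translates into evenness of $\tilde\varphi$ in $v$, so I reduce to $\Omega_+:=\Omega\cap\{v\ge 0\}$, on which $q_\alpha$ carries a Robin BC of coefficient $\alpha/(2\sqrt{2})$ at $v=0$, a Dirichlet BC at $v=d/\sqrt{2}$, and a Neumann BC along the diagonal $u=v$ (inherited from the Neumann BC on the axes $x=0$ and $y=0$). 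I then split $\Omega_+=T_+\cup F_+$ into the corner triangle $T_+:=\{0\le v\le u\le d/\sqrt{2}\}$ and the semi-infinite rectangle $F_+:=[d/\sqrt{2},\infty)\times[0,d/\sqrt{2}]$, and bound the restriction of $q_\alpha$ to each piece separately.

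On $T_+$ the Neumann BC along $u=v$ permits me to reflect $\tilde\varphi|_{T_+}$ across this diagonal, producing a $u\leftrightarrow v$-symmetric extension $\psi\in H^1(S)$ on the square $S:=[0,d/\sqrt{2}]^2$. A short book-keeping computation yields $Q_S[\psi]=2\,q_\alpha|_{T_+}[\tilde\varphi]$ and $\|\psi\|^2_S=2\|\tilde\varphi\|^2_{T_+}$, where $Q_S$ denotes the quadratic form of the Laplacian on $S$ with Robin BC (coefficient $\alpha/(2\sqrt{2})$) on $\{u=0\}\cup\{v=0\}$ and Neumann BC on the two remaining sides. This operator separates into two one-dimensional Robin--Neumann Laplacians on $[0,d/\sqrt{2}]$, so its ground-state energy is $2\varepsilon^N_0(\alpha)$, and Rayleigh--Ritz gives $q_\alpha|_{T_+}[\tilde\varphi]\ge 2\varepsilon^N_0(\alpha)\|\tilde\varphi\|^2_{T_+}$.

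On $F_+$ the form is already separable after treating the interface $u=d/\sqrt{2}$ as a free (Neumann in the form sense) boundary, and its bottom of spectrum equals $\varepsilon^D_0(\alpha)$ -- the Robin--Dirichlet eigenvalue in $v$ combined with the half-line spectrum $[0,\infty)$ in $u$. Hence $q_\alpha|_{F_+}[\tilde\varphi]\ge\varepsilon^D_0(\alpha)\|\tilde\varphi\|^2_{F_+}$; together with the one-dimensional inequality $\varepsilon^D_0(\alpha)\ge 2\varepsilon^N_0(\alpha)$, which follows from a short comparison of the transcendental equations $kL\tan(kL)=\beta L$ and $kL\cot(kL)=-\beta L$ defining the two eigenvalues (with $L=d/\sqrt{2}$, $\beta=\alpha/(2\sqrt{2})$), summing over $T_+$ and $F_+$ yields $q_\alpha[\tilde\varphi]\ge 2\varepsilon^N_0(\alpha)\|\tilde\varphi\|^2$, i.e., $E_0(\alpha)\ge 2\varepsilon^N_0(\alpha)$. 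The main obstacle is the clean set-up of the reflection extension on $T_+$ (checking that $\psi$ indeed belongs to the form domain of $Q_S$) and the verification of the one-dimensional comparison $\varepsilon^D_0\ge 2\varepsilon^N_0$; the rest is routine book-keeping.
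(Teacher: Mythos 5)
Your argument is correct, and it is in substance the Dirichlet--Neumann bracketing plus separation-of-variables method that the paper invokes only by reference (its printed ``proof'' is a one-line citation of Theorems~2.3 and 2.4 of \cite{KernerElectronPairs}); your version has the merit of being self-contained. The upper bound via $E_0(\alpha)\le\inf\sigma_{ess}(H_\alpha)=\varepsilon^{D}_0(\alpha)$ is immediate, the reduction to $\Omega_+$ with Robin coefficient $\tfrac{\alpha}{2\sqrt{2}}$ is the correct normalisation (the Dirichlet form and the $L^2$-norm both pick up a factor $2$, while the trace term $\alpha\int|\varphi(x,x)|^2\,\ud x=\tfrac{\alpha}{\sqrt{2}}\int|\tilde\varphi(u,0)|^2\,\ud u$ sits on the boundary $v=0$ of $\Omega_+$), and the even reflection of $T_+$ across $u=v$ lands automatically in the form domain of $Q_S$, which is all of $H^1(S)$ because Robin and Neumann conditions are natural -- so the ``main obstacle'' you flag there is in fact trivial. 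The one step you leave unverified, $\varepsilon^{D}_0(\alpha)\ge 2\varepsilon^{N}_0(\alpha)$, does hold (for $\alpha=0$ it is trivial since $\varepsilon^{N}_0(0)=0$): with $L=d/\sqrt{2}$, $\beta=\tfrac{\alpha}{2\sqrt{2}}>0$, $t=k_NL\in(0,\pi/2)$, $s=k_DL\in(\pi/2,\pi)$, $B=\beta L$, the eigenvalue equations read $t\tan t=B$ and $s\cot s=-B$, and the identity $2t\cot(2t)=t\cot t-t\tan t>-B=s\cot s$ combined with the strict monotone decrease of $\sigma\mapsto\sigma\cot\sigma$ on $(0,\pi)$ yields $s>2t$, i.e.\ even the stronger bound $\varepsilon^{D}_0(\alpha)\ge 4\varepsilon^{N}_0(\alpha)$. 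With that comparison supplied, your proof is complete.
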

\begin{proof} This statement follows using the methods employed in the proof of Theorems~2.3,~2.4 of \cite{KernerElectronPairs}.
\end{proof}
Finally, we have the following statement.
\begin{prop}\label{NumberEigenvalues} For any $\alpha \geq 0$, the number of eigenvalues of $H_{\alpha}$ smaller than $\varepsilon^{D}_0(\alpha)$ is finite.
\end{prop}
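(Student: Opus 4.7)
The plan is to use Neumann bracketing to reduce the statement to the finiteness of eigenvalues of two simpler operators, one on a bounded domain and one on a half-infinite rectangle. The geometry is most transparent in the rotated coordinates $u=(x-y)/\sqrt{2}$, $v=(x+y)/\sqrt{2}$, which are orthogonal and turn $-\partial_x^2-\partial_y^2$ into $-\partial_u^2-\partial_v^2$. Exploiting the symmetry $\varphi(x,y)=\varphi(y,x)$ built into the form domain, one works on the fundamental half $\tilde{\Omega}:=\Omega\cap\{y\leq x\}$, which in these coordinates is the set $\{(u,v):0\leq u\leq d/\sqrt{2},\ v\geq u\}$. Crucially, for $v\geq d/\sqrt{2}$ this becomes the half-infinite rectangle $[0,d/\sqrt{2}]\times[d/\sqrt{2},\infty)$, with Robin boundary at $u=0$ (the diagonal, carrying the Lieb--Liniger term) and Dirichlet boundary at $u=d/\sqrt{2}$.

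Fix any $R>d/\sqrt{2}$ and cut $\tilde{\Omega}$ along $\{v=R\}$ into a bounded Lipschitz piece $\tilde{\Omega}_R$ and the half-infinite rectangle $\tilde{\Omega}^R:=[0,d/\sqrt{2}]\times[R,\infty)$. Dropping continuity across the cut enlarges the form domain, so the resulting operator $H_\alpha^N=H_\alpha^{\tilde{\Omega}_R}\oplus H_\alpha^{\tilde{\Omega}^R}$ satisfies $H_\alpha^N\leq H_\alpha$ in the form sense. By min--max it is enough to show that $H_\alpha^N$ has only finitely many eigenvalues strictly below $\varepsilon_0^D(\alpha)$.

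On $\tilde{\Omega}^R$ the form separates: the $\delta$-interaction sits entirely on the transverse boundary $u=0$, so the operator is the tensor sum of $-\ud^2/\ud u^2$ on $\cD_{D,\alpha}$ (whose bottom eigenvalue is $\varepsilon_0^D(\alpha)$) and the Neumann Laplacian on $[R,\infty)$ (with spectrum $[0,\infty)$). Hence $\sigma\bigl(H_\alpha^{\tilde{\Omega}^R}\bigr)=[\varepsilon_0^D(\alpha),\infty)$ and this piece contributes no eigenvalue below the threshold. On the bounded piece $\tilde{\Omega}_R$ the form domain is continuously embedded in $H^1(\tilde{\Omega}_R)$, which embeds compactly into $L^2(\tilde{\Omega}_R)$ by Rellich--Kondrachov; the boundary term $\alpha\int|\varphi(x,x)|^2\,\ud x$ is controlled by the trace estimate already invoked in the proof of the first theorem of Section~\ref{Model} and is a relatively form-compact perturbation of the Dirichlet/Neumann/Robin Laplacian. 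Therefore $H_\alpha^{\tilde{\Omega}_R}$ has compact resolvent and purely discrete spectrum, so only finitely many of its eigenvalues lie below $\varepsilon_0^D(\alpha)$.

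The only delicate point is the bracketing step: one must verify that the cut along $\{v=R\}$ is compatible with the form in the sense that the larger form domain (functions with a jump across $v=R$, restricted to each side to the appropriate mixed Sobolev space) contains the original $\cD_q$ and that the boundary integral in $q_\alpha$ splits correctly between the two pieces. This is routine once one notes that the diagonal $\{x=y\}$ intersects the cut transversally at a single point, so the surface measure supporting the contact term decomposes cleanly. Everything else is a direct application of separation of variables and Rellich--Kondrachov, which together yield the claim.
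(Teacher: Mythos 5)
Your proof is correct and takes essentially the same route as the paper, which merely invokes ``an operator-bracketing argument similar to the one used in the proof of [Theorem~2.4,\cite{KernerElectronPairs}]'' without spelling it out. You have supplied exactly that Neumann-bracketing decomposition: a bounded Lipschitz piece with purely discrete spectrum plus a half-infinite rectangle whose separated operator has spectrum $[\varepsilon^{D}_0(\alpha),\infty)$, so only the bounded piece can contribute, and it contributes finitely many eigenvalues below the threshold.
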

\begin{proof} The statement follows by an operator-bracketing argument similar to the one used in the proof of [Theorem~2.4,\cite{KernerElectronPairs}], see also \cite{KernerMuhlenbruch2}.
\end{proof}
\section{On the condensation of pairs and the superconducting phase}
\label{CondensationSection}
The condensation phenomenon in a system of non-interacting pairs with zero interaction strength (i.e., $\alpha=0$) was investigated in~\cite{KernerElectronPairs}. In this section it is our goal to generalise the results to some cases where $\alpha > 0$.

The standard treatment of Bose-Einstein condensation in quantum statistical mechanics requires a thermodynamic limit \cite{RuelleSM}. In a first step one therefore reduces the one-pair configuration space from the half-line $\mathbb{R}_+$ to the interval $[0,L]$. Consequently, the one-pair Hilbert space is $L^2_s(\Omega_L)$ with
\begin{equation}
\Omega_L:=\{(x,y) \in \Omega: x,y \leq L \}\ .
\end{equation}
On this Hilbert space on then constructs a form $q^L_{\alpha}$ as the obvious version of $q_{\alpha}$ on $L^2_s(\Omega_L)$ which yields the self-adjoint operator $H^{L}_{\alpha}$, i.e., the version of $H_{\alpha}$ on $L^2_s(\Omega_L)$. Note that we impose Dirichlet boundary conditions along the boundary segments of $\Omega_L$ for which $x=L$ or $y=L$, see also \cite{KernerElectronPairs} for more details.

Since $\Omega_L$ is a bounded Lipschitz domain, the spectrum of $H^{L}_{\alpha}$ is purely discrete. Let $\{E^L_n(\alpha)\}_{n \in \mathbb{N}_0}$ be the corresponding eigenvalues, counted with multiplicity. In a first result we establish a finite-volume version of Proposition~\ref{NumberEigenvalues}, being proved similarly.
\begin{prop} The number of eigenvalues of $H^L_{\alpha}$ smaller than $\varepsilon^{D}_0(\alpha)$ is uniformly bounded for all $L > d$.
\end{prop}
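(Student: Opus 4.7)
The plan is to adapt the Dirichlet--Neumann bracketing strategy used in \cite{KernerElectronPairs,KernerMuhlenbruch2} while making every constant uniform in $L$. After passing to the rotated coordinates $u=(x+y)/\sqrt{2}$, $v=(x-y)/\sqrt{2}$ and using the symmetry $\varphi(x,y)=\varphi(y,x)$, one identifies $L^2_s(\Omega_L)$ with a space of functions on $\{(u,v):u\ge v\ge 0,\,v\le d/\sqrt{2},\,u+v\le L\sqrt{2}\}$, in which the Dirichlet condition along $\partial\Omega_D$ becomes a Dirichlet condition at $v=d/\sqrt{2}$ and the $\alpha$-contribution in $q_\alpha^L$ becomes a one-dimensional integral along $v=0$. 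Fix $R>d/\sqrt{2}$ once and for all, and split $\Omega_L$ along $\ell_R:=\{u=R\}$ into the near-corner piece $\Omega_L^{(1)}:=\Omega_L\cap\{u\le R\}$ and the tube piece $\Omega_L^{(2)}:=\Omega_L\cap\{u\ge R\}$.

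Imposing an additional Neumann condition along $\ell_R$ decouples $q_\alpha^L$ into the direct sum of two forms on these pieces; this is legitimate because the Robin term is a line integral along $\{v=0\}$ that splits additively under a cut transverse to the diagonal. Standard Neumann bracketing then yields
\[
N(\lambda,H_\alpha^L)\;\le\;N(\lambda,H_\alpha^{(1),L})+N(\lambda,H_\alpha^{(2),L})
\]
for every $\lambda\in\rz$, with $H_\alpha^{(i),L}$ the self-adjoint operator associated to the restricted form. For $L\ge R/\sqrt{2}$ the region $\Omega_L^{(1)}$ is independent of $L$; since it is a fixed bounded Lipschitz domain, $H_\alpha^{(1),L}$ has compact resolvent and $N(\varepsilon^D_0(\alpha),H_\alpha^{(1),L})$ is a finite constant depending only on $\alpha$, $d$, and $R$. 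For $d<L<R/\sqrt{2}$ the whole of $\Omega_L$ is itself contained in a fixed bounded Lipschitz domain and the same bound applies trivially.

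For the tube piece I would compare $H_\alpha^{(2),L}$ with the operator $\widetilde H_\alpha$ on the infinite half-strip $\Sigma_R:=[R,\infty)\times[0,d/\sqrt{2}]$ carrying the same Robin condition at $v=0$, Dirichlet at $v=d/\sqrt{2}$, and a Neumann condition at $u=R$. The Dirichlet condition inherited by $H_\alpha^{(2),L}$ along $\{x=L\}\cup\{y=L\}$ ensures that trial functions extend by zero into $\Sigma_R$ without loss of $H^1$-regularity, and the extension preserves the form; this yields $\widetilde H_\alpha\le H_\alpha^{(2),L}$ in the form sense and hence $N(\lambda,H_\alpha^{(2),L})\le N(\lambda,\widetilde H_\alpha)$. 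But $\widetilde H_\alpha=A_\alpha\otimes I+I\otimes B$ separates, where $A_\alpha$ is precisely the one-dimensional operator on the domain $\cD_{D,\alpha}$ from \eqref{OperatorDomainOneDimensionalII} (with ground-state eigenvalue $\varepsilon^D_0(\alpha)$), and $B=-\ud^2/\ud u^2$ on $[R,\infty)$ with Neumann at $u=R$, whose spectrum is $[0,\infty)$ and purely absolutely continuous. Therefore $\sigma(\widetilde H_\alpha)=[\varepsilon^D_0(\alpha),\infty)$ is purely continuous, so $N(\varepsilon^D_0(\alpha),\widetilde H_\alpha)=0$, and combining with the near-corner estimate gives the required uniform bound.

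The main technical point is the bookkeeping of the different boundary conditions where $\ell_R$ meets the top edge $v=d/\sqrt{2}$, the diagonal $v=0$, and the truncation $\{x=L\}\cup\{y=L\}$, together with checking that Neumann bracketing remains valid in the presence of the Robin form on $\{v=0\}$ and that extension by zero really lands in the form domain of $\widetilde H_\alpha$. Since $\ell_R$ meets both the diagonal and the top edge transversally and the Dirichlet condition at $\{x=L\}\cup\{y=L\}$ forces trial functions to have vanishing trace there, both checks are routine and reduce to the standard bracketing and trace-extension arguments used in [Theorem~2.4,\cite{KernerElectronPairs}].
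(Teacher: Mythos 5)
Your proposal is correct and follows essentially the same route the paper takes: its proof is a one-line appeal to the operator-bracketing argument of [Theorem~2.4, KernerElectronPairs] and [KernerMuhlenbruch2], which is exactly the Neumann-bracketing-plus-separation-of-variables scheme you spell out (cut at $u=R$, a finite $L$-independent count from the fixed corner piece, and no spectrum below $\varepsilon^{D}_0(\alpha)$ for the half-strip operator $A_\alpha\otimes I+I\otimes B$). The only loose ends are minor: the corner piece becomes $L$-independent once $L\sqrt{2}\ge R+d/\sqrt{2}$ rather than $L\ge R/\sqrt{2}$, and the uniform bound over the remaining compact range of $L$ requires a uniform Weyl-type estimate (or a further bracketing) rather than bare containment in a bounded set, since Neumann/Robin eigenvalue counts are not monotone under domain inclusion.
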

Working in the grand-canonical ensemble, the number of pairs occupying the eigenstate with eigenvalue $E^L_n(\alpha)$ at inverse temperature $\beta=\frac{1}{T}$ is given by
\begin{equation}
\frac{1}{e^{\beta(E^L_n(\alpha)-\mu_L)}}\ ,
\end{equation}
$\mu_L \leq E^L_0(\alpha)$ denoting the chemical potential \cite{RuelleSM}. The thermodynamic limit is then defined as the limit $L \rightarrow \infty$ such that 
\begin{equation}\label{ChemicalPotentialCondition}
\frac{1}{L}\sum_{n=0}^{\infty}\frac{1}{e^{\beta(E^L_n(\alpha)-\mu_L)}}=\rho
\end{equation}
holds for all values of $L$ with $\rho > 0$ being the density of pairs. 

Now, we obtain the following result which is obtained in analogy to [eq.~(3.4),\cite{KernerElectronPairs}].
\begin{prop}\label{ParticleNumberExcitedStates} Let $\mu < \varepsilon^{D}_0(\alpha)$ be given. Then
	\begin{equation}
	\lim_{L \rightarrow \infty}\frac{1}{L}\sum_{n:E^L_n(\alpha) \geq \varepsilon^{D}_0(\alpha) }\frac{1}{e^{\beta(E^L_n(\alpha)-\mu)}}=\frac{1}{\pi}\sum_{n=0}^{\infty}\int_{0}^{\infty}\frac{1}{e^{\beta \varepsilon^{D}_n(\alpha)}e^{\beta(x^2-\mu)}-1}\ \ud x\ .
	\end{equation}\ .
\end{prop}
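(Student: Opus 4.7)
The plan is to mimic the proof of [eq.~(3.4),\cite{KernerElectronPairs}], combining a Dirichlet--Neumann bracketing with separation of variables in suitable coordinates. In the rotated coordinates $u=(x+y)/\sqrt 2$, $v=(x-y)/\sqrt 2$, together with the symmetry $\varphi(x,y)=\varphi(y,x)$ (which lets us restrict to $v\geq 0$), the domain $\Omega_L$ becomes the trapezoid $\{(u,v):v\leq u\leq L\sqrt 2-v,\ 0\leq v\leq d/\sqrt 2\}$. Away from the two triangular corners near $u=v$ and $u=L\sqrt 2-v$, this is a strip $[a,b]\times[0,d/\sqrt 2]$ with $b-a$ of order $L$, on which $-\partial_u^2-\partial_v^2$ separates; the $v$-problem is precisely $-\ud^2/\ud v^2$ on $\cD_{D,\alpha}$, with eigenvalues $\{\varepsilon^D_n(\alpha)\}_{n\in\mathbb{N}_0}$.

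First I would insert extra Neumann (respectively Dirichlet) walls at two values of $u$ isolating the corners, producing two bracketing operators $H^{L,\pm}_\alpha$ with $H^{L,-}_\alpha\leq H^L_\alpha\leq H^{L,+}_\alpha$. The corner pieces are $L$-independent bounded domains; by the finite-volume analogue of Proposition~\ref{NumberEigenvalues} they carry only finitely many eigenvalues below $\varepsilon^D_0(\alpha)$ uniformly in $L$, and therefore contribute at most $O(1)$ to the sum over states with $E^L_n(\alpha)\geq\varepsilon^D_0(\alpha)$, which vanishes after the prefactor $1/L$. On the rectangular middle, the separable eigenvalues read $\lambda^L_k+\varepsilon^D_n(\alpha)$, with $\lambda^L_k=(\pi k/\ell_L)^2+O(\ell_L^{-2})$ and $\ell_L$ growing linearly in $L$.

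Next I would apply a Riemann-sum argument. For each fixed $n$, setting $x=\pi k/\ell_L$ with spacing $\pi/\ell_L$ yields
\begin{equation*}
\frac{1}{L}\sum_{k\geq 0}\frac{1}{e^{\beta(\lambda^L_k+\varepsilon^D_n(\alpha)-\mu)}-1}\;\xrightarrow[L\to\infty]{}\;\frac{1}{\pi}\int_0^\infty\frac{1}{e^{\beta(x^2+\varepsilon^D_n(\alpha)-\mu)}-1}\,\ud x\ ,
\end{equation*}
the geometric prefactor $1/\pi$ being fixed by the limit $\ell_L/L$ (which absorbs the rotation factor $\sqrt 2$ together with the symmetrisation factor $2$). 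Since $\mu<\varepsilon^D_0(\alpha)\leq\varepsilon^D_n(\alpha)$, each integrand is dominated by a multiple of $e^{-\beta\varepsilon^D_n(\alpha)}$, so the sum over $n$ converges absolutely and the interchange of limit and sum is justified by dominated convergence.

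The main obstacle is to verify that the two bracketing limits coincide, that is, that neither the artificial decoupling of the corners nor the replacement of the trapezoid by a rectangle alters the leading order of $\frac{1}{L}\sum$. This amounts to a Weyl-type comparison of eigenvalue counting functions for $H^{L,+}_\alpha$ and $H^{L,-}_\alpha$: the two differ by $o(L)$ on every compact energy window, thanks to the $L$-independent bound on low-lying eigenvalues provided by (the finite-volume version of) Proposition~\ref{NumberEigenvalues} combined with the fact that both bracketing operators have the same asymptotic Weyl constant. The remaining technical verifications are entirely parallel to [Theorem~2.4,\cite{KernerElectronPairs}].
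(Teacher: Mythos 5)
Your overall strategy --- rotate to centre-of-mass/relative coordinates, cut off the two $L$-independent corner pieces by Dirichlet--Neumann bracketing, separate variables on the remaining rectangle so that the transverse problem is exactly $-\ud^2/\ud v^2$ on $\cD_{D,\alpha}$, and then pass to the limit by a Riemann sum in the longitudinal modes combined with dominated convergence in $n$ --- is precisely the argument the paper delegates to eq.~(3.4) of the cited predecessor; the paper itself supplies no further detail, so on the level of method there is nothing to object to. The corner and bracketing errors are also handled correctly: the corner contributions are $O(1)$ uniformly in $L$ and are killed by the prefactor $1/L$, and the monotonicity of $E\mapsto\bigl(e^{\beta(E-\mu)}-1\bigr)^{-1}$ turns the operator inequalities into two-sided bounds on the sum.

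The one step that does not hold together as written is the determination of the prefactor, which is the entire quantitative content of the identity. In your own set-up the longitudinal interval has length $\ell_L=\sqrt{2}\,L+O(1)$, so the Riemann sum $\frac{1}{L}\sum_{k}f\bigl((\pi k/\ell_L)^2\bigr)$ converges to $\lim_{L\to\infty}\frac{\ell_L}{\pi L}\int_0^\infty f(x^2)\,\ud x=\frac{\sqrt{2}}{\pi}\int_0^\infty f(x^2)\,\ud x$, not to $\frac{1}{\pi}\int_0^\infty f(x^2)\,\ud x$. Your parenthetical claim that $\ell_L/L$ ``absorbs the rotation factor $\sqrt{2}$ together with the symmetrisation factor $2$'' is not coherent: the symmetrisation $\varphi(x,y)=\varphi(y,x)$ acts only in the transverse variable $v$ (it halves the transverse interval to $[0,d/\sqrt{2}]$ and, together with the Jacobian of the rotation, produces the coefficient $\alpha/2\sqrt{2}$ in the Robin condition defining $\cD_{D,\alpha}$); it contributes no factor to the density of longitudinal modes, and in any case $\sqrt{2}\cdot\tfrac12\neq 1$. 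You must either carry the factor $\sqrt{2}$ honestly through to the right-hand side or show explicitly where it cancels; as it stands, the computation you sketch establishes the identity with $\sqrt{2}/\pi$ in place of $1/\pi$, and you should confront this discrepancy with the stated formula rather than wave it away. (A smaller point: you silently replace the weight $e^{-\beta(E^L_n(\alpha)-\mu)}$ appearing on the left-hand side of the Proposition by the Bose weight $\bigl(e^{\beta(E^L_n(\alpha)-\mu)}-1\bigr)^{-1}$; this is surely the intended reading, since otherwise the two sides cannot agree, but the substitution should be stated.)
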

\begin{remark} Proposition~\ref{ParticleNumberExcitedStates} characterises the density of pairs occupying the states with eigenvalues not smaller than the bottom of the essential spectrum of $H_{\alpha}$.
\end{remark}
With Proposition~\ref{ParticleNumberExcitedStates} at hand we can now prove the main result of this section. For this note that we say an eigenstate $\varphi^{L}_n(\alpha) \in H^1(\Omega_L)$ with associated eigenvalue $E^L_n(\alpha)$ is macroscopically occupied in the thermodynamic limit if 
\begin{equation}
\limsup_{L \rightarrow \infty}\frac{1}{e^{\beta(E^L_n(\alpha)-\mu_L)}} > 0\ .
\end{equation}
\begin{theorem}[Condensation of interacting pairs] Let $\alpha \geq 0$ be such that $H_{\alpha}$ has non-trivial discrete spectrum. Then the ground state $\varphi^{L}_0(\alpha)$ is macroscopically occupied in the thermodynamic limit. 
\end{theorem}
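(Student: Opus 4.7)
The strategy is to follow the Einstein-type argument used in \cite{KernerElectronPairs}. I split the thermodynamic-limit identity~\eqref{ChemicalPotentialCondition} according to whether $E^{L}_n(\alpha)$ lies below or above the bottom of the essential spectrum:
\begin{equation*}
\rho = \frac{1}{L}\sum_{n:\,E^L_n(\alpha)<\varepsilon^{D}_0(\alpha)} \frac{1}{e^{\beta(E^{L}_n(\alpha)-\mu_L)}-1} + \frac{1}{L}\sum_{n:\,E^L_n(\alpha)\geq\varepsilon^{D}_0(\alpha)} \frac{1}{e^{\beta(E^{L}_n(\alpha)-\mu_L)}-1}.
\end{equation*}
The first (low-energy) sum contains at most $N$ terms uniformly in $L$ by the finite-volume version of Proposition~\ref{NumberEigenvalues} stated just before the theorem, while the second (high-energy) sum is controlled by Proposition~\ref{ParticleNumberExcitedStates}. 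The hypothesis that $H_{\alpha}$ has non-trivial discrete spectrum, together with Theorem~\ref{DiscreteSpectrum}, implies the spectral gap $E_0(\alpha)<\varepsilon^{D}_0(\alpha)$, and a trial-function/Dirichlet-bracketing argument using a (suitably truncated) eigenstate of $H_\alpha$ yields $E^{L}_0(\alpha)\to E_0(\alpha)$ as $L\to\infty$.

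The central step is to show that $\mu_L\to E_0(\alpha)$. Assume to the contrary that, along a subsequence, $\mu_L\to\mu_\infty<E_0(\alpha)$. Then the low-energy sum is bounded by $N/L$ times a uniformly bounded Bose factor (since $E^{L}_n(\alpha)-\mu_L \geq E^{L}_0(\alpha)-\mu_L$ is eventually bounded below by a positive constant) and hence vanishes as $L\to\infty$. The high-energy sum converges, by Proposition~\ref{ParticleNumberExcitedStates}, to the finite quantity
\begin{equation*}
\rho_c(\mu_\infty):=\frac{1}{\pi}\sum_{n=0}^{\infty}\int_0^\infty \frac{1}{e^{\beta\varepsilon^{D}_n(\alpha)}e^{\beta(x^2-\mu_\infty)}-1}\,\ud x,
\end{equation*}
which remains finite even at $\mu_\infty = E_0(\alpha)$: the strict inequality $\varepsilon^{D}_0(\alpha)>E_0(\alpha)$ keeps the integrand bounded at $x=0$, the Gaussian factor controls large $x$, and the series over $n$ converges by the $n^2$ growth of $\varepsilon^{D}_n(\alpha)$. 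Choosing the density $\rho$ larger than $\rho_c(E_0(\alpha))$ therefore contradicts~\eqref{ChemicalPotentialCondition}, forcing $\mu_L\to E_0(\alpha)$.

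Once $\mu_L\to E_0(\alpha)$ is established, the high-energy sum contributes at most $\rho_c(E_0(\alpha))$ in the limit, so the low-energy sum must absorb the excess $\rho-\rho_c(E_0(\alpha))>0$. Since this sum has at most $N$ non-negative terms and since the Bose factor $(e^{\beta(E^{L}_n(\alpha)-\mu_L)}-1)^{-1}$ is maximised at $n=0$ by the ordering $E^{L}_0(\alpha)\leq E^{L}_n(\alpha)$, we obtain
\begin{equation*}
\limsup_{L\to\infty}\frac{1}{L}\cdot\frac{1}{e^{\beta(E^{L}_0(\alpha)-\mu_L)}-1} \geq \frac{\rho-\rho_c(E_0(\alpha))}{N}>0,
\end{equation*}
which in particular forces $\beta(E^{L}_0(\alpha)-\mu_L)\to 0$ and therefore $\limsup_{L\to\infty}(e^{\beta(E^{L}_0(\alpha)-\mu_L)})^{-1}>0$, exactly the macroscopic occupation defined in the paper. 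The main obstacle is the rigorous justification of $\mu_L\to E_0(\alpha)$; this in turn rests on the finiteness of Einstein's critical density $\rho_c(E_0(\alpha))$, which is guaranteed precisely by the spectral gap provided under the hypothesis by Theorem~\ref{DiscreteSpectrum}.
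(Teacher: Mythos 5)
Your proposal is correct and follows essentially the same route as the paper: split the density identity \eqref{ChemicalPotentialCondition} at $\varepsilon^{D}_0(\alpha)$, control the high-energy part via Proposition~\ref{ParticleNumberExcitedStates} by a $\rho$-independent critical density (finite precisely because of the spectral gap $E_0(\alpha)<\varepsilon^{D}_0(\alpha)$), and let a sufficiently large $\rho$ force the excess into the uniformly finite number of low-lying states, of which the ground state is the most occupied. The only deviation is your intermediate step $\mu_L\to E_0(\alpha)$, which is sound but not actually needed: the paper gets by with the cruder observation $\mu_L\le E^{L}_0(\alpha)\to E_0(\alpha)<\varepsilon^{D}_0(\alpha)$, which already pins $\mu_L$ below a fixed $\mu<\varepsilon^{D}_0(\alpha)$ and suffices to make the high-energy contribution finite and independent of $\rho$.
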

\begin{proof} Based on [Lemma~3.1,\cite{KernerElectronPairs}] and [Proposition~3.2,\cite{KernerElectronPairs}] we first observe that the assumptions imply the existence of a value $L_0$ such that $H^{L}_{\alpha}$ has an eigenvalue smaller than $\varepsilon^{D}_0(\alpha)$ for all values $L > L_0$. Furthermore, the lowest eigenvalue of $H^{L}_{\alpha}$ converges to $\inf \sigma(H_{\alpha}) < \varepsilon^{D}_0(\alpha)$ as $L \rightarrow \infty$.
	
	Now, since one has $\mu_L < \inf \sigma(H^L_{\alpha})$ for all $L$, we conclude that $\mu_L < \inf \sigma(H_{\alpha})+\varepsilon_1$ for all $L$ large enough and $\varepsilon_1 > 0$ arbitrarily small. Accordingly we have $\mu_L < \varepsilon^{D}_0(\alpha)-\varepsilon_2:=\mu$ for some small $\varepsilon_2 > 0$ and $L$ large enough. This then allows us to arrive at the estimate 
	\begin{equation}\label{EstimateCondensation}\begin{split}
	\frac{1}{L}\sum_{n:E^L_n(\alpha) \geq \varepsilon^{D}_0(\alpha) }\frac{1}{e^{\beta(E^L_n(\alpha)-\mu_L)}}&\leq \frac{1}{L}\sum_{n:E^L_n(\alpha) \geq \varepsilon^{D}_0(\alpha) }\frac{1}{e^{\beta(E^L_n(\alpha)-\mu)}}\\
	&\leq \frac{1}{\pi}\sum_{n=0}^{\infty}\int_{0}^{\infty}\frac{1}{e^{\beta \varepsilon^{D}_n(\alpha)}e^{\beta(x^2-\mu)}-1}\ud x +\varepsilon_3(L)\ ,
	\end{split}
	\end{equation}
	for some $\varepsilon_3(L) > 0$ with $\varepsilon_3(L) \rightarrow 0$ as $L\rightarrow \infty$, taking Proposition~\ref{ParticleNumberExcitedStates} into account. 
	
	The important fact is now that $\varepsilon_3,\mu$ are independent of the pair density $\rho > 0$ which affects only the sequence $(\mu_L)$ according to~\eqref{ChemicalPotentialCondition}. Hence, comparing \eqref{EstimateCondensation} with \eqref{ChemicalPotentialCondition} one concludes that 
	\begin{equation}
	\limsup_{L \rightarrow \infty}\frac{1}{L}\sum_{n:E^L_n(\alpha) < \varepsilon^{D}_0(\alpha) }\frac{1}{e^{\beta(E^L_n(\alpha)-\mu_L)}} > 0
	\end{equation}
	for a large enough pair density $\rho > 0$. From this the statement follows immediately, since the ground state is the state which is occupied the most.
\end{proof}
Applying Theorem~\ref{DiscreteSpectrum}, we readily obtain the following result.
\begin{cor}\label{CorollarySuperPhase} Let $\alpha$ be as characterised in Theorem~\ref{DiscreteSpectrum}. Then the ground state of $H^L_{\alpha}$ is macroscopically occupied in the thermodynamic limit.
\end{cor}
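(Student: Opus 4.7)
The plan is to combine the two main results that immediately precede the corollary: Theorem~\ref{DiscreteSpectrum} supplies the hypothesis needed by the condensation theorem for precisely the values of $\alpha$ stated in the corollary. So the proof is essentially a two-line chaining argument, and there is no real obstacle.

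First I would recall that Theorem~\ref{DiscreteSpectrum} gives constants $\alpha_1,\alpha_2>0$ such that for every $\alpha\geq 0$ satisfying $\alpha<\alpha_1$ or $\alpha>\alpha_2$ the infinite-volume operator $H_{\alpha}$ has nonempty discrete spectrum, i.e.\ an eigenvalue strictly below the bottom $\varepsilon^{D}_{0}(\alpha)$ of its essential spectrum. Second, I would invoke the condensation theorem proved just above: whenever $H_{\alpha}$ has nontrivial discrete spectrum, the ground state $\varphi^{L}_{0}(\alpha)$ of the finite-volume operator $H^{L}_{\alpha}$ is macroscopically occupied in the thermodynamic limit defined by~\eqref{ChemicalPotentialCondition}. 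Chaining these two facts gives the claim.

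If I wanted to be slightly more explicit, I would remind the reader that the condensation argument uses two ingredients which are both furnished by the hypothesis: existence of an eigenvalue of $H^{L}_{\alpha}$ strictly below $\varepsilon^{D}_{0}(\alpha)$ for all $L>L_{0}$ (via the finite-volume version of Proposition~\ref{NumberEigenvalues} together with convergence of the lowest finite-volume eigenvalue to $\inf\sigma(H_{\alpha})<\varepsilon^{D}_{0}(\alpha)$), and the uniform bound Proposition~\ref{ParticleNumberExcitedStates} on the density of pairs in excited states. Both are available for the $\alpha$ in the range specified by Theorem~\ref{DiscreteSpectrum}, so the conclusion follows directly.

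Since the corollary is a pure specialisation of the preceding theorem under a hypothesis verified by Theorem~\ref{DiscreteSpectrum}, there is no real difficulty; the only thing worth flagging is that one must make sure the constants $\alpha_{1},\alpha_{2}$ are exactly those for which the discrete spectrum argument applies, which is already their definition.
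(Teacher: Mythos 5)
Your proposal is correct and matches the paper exactly: the corollary is obtained by chaining Theorem~\ref{DiscreteSpectrum} (which guarantees non-trivial discrete spectrum for the stated range of $\alpha$) with the preceding condensation theorem, which is precisely what the paper does when it says the result follows ``readily''. No further comment is needed.
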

\begin{remark} As described in \cite{KernerElectronPairs,KernerSurfaceDefects}, a macroscopic occupation of a single-pair state in the thermodynamic limit is associated with a superconducting phase in the bulk~\cite{BCSI,CooperBoundElectron,MR04}. Hence, Corollary~\ref{CorollarySuperPhase} shows that even very strong contact interactions of the Lieb-Liniger type do not lead to a destruction of the superconducting phase (in the bulk). Note that this is in sharp contrast to [Remark~3.7,\cite{KernerElectronPairs}] in which the effect of non-separable singular two-particle interactions is discussed.
\end{remark}

\vspace*{0.5cm}


\vspace*{0.5cm}

{\small
\bibliographystyle{amsalpha}
\bibliography{Literature}}

\end{document}